\newtheorem{theorem}{Theorem}
\newtheorem{proposition}{Proposition}
\newtheorem{notation}{Notation}
\newenvironment{textbmatrix}{   \setlength{\arraycolsep}{2.5pt}%
                                                                \big[\begin{matrix}}{\end{matrix}\big]%
                                                                \raisebox{0.08ex}{\vphantom{M}}}
\def\be{\begin{equation}}
\def\ee{\end{equation}}
\def\een{\nonumber \end{equation}}
\def\mat{\begin{bmatrix}}
\def\emat{\end{bmatrix}}
\def\btm{\begin{textbmatrix}}
\def\etm{\end{textbmatrix}}
\def\ba#1\ea{\begin{align}#1\end{align}}
\def\bs#1\es{\begin{split}#1\end{split}}
\def\bg#1\eg{\begin{gather}#1\end{gather}}
\def\bi#1\ei{\begin{itemize}#1\end{itemize}}
\newcommand{\safemath}[2]{\newcommand{#1}{\ensuremath{#2}\xspace}}
\newcommand{\lefto}{\mathopen{}\left}
\DeclareMathOperator{\Varop}{\mathbb{V}\!\mathrm{ar}} 
\safemath{\interior}{\mathrm{Int}}                       
\newcommand{\inv}[1]{\ensuremath{#1^{-1}}}      
\safemath{\dfn}{:=}                                                     
\safemath{\dirac}{\delta}                                       
\safemath{\No}{N_0}                                                     
\safemath{\Es}{E_s}                                                     
\safemath{\Eb}{E_b}                                                     
\safemath{\EbNo}{\frac{\Eb}{\No}}
\safemath{\EsNo}{\frac{\Es}{\No}}
\DeclareMathOperator{\CHop}{\ensuremath{\mathbb{H}}} 
\safemath{\tvir}{h_{\CHop}}                                     
\safemath{\tvtf}{L_{\CHop}}                                     
\safemath{\spf}{S_{\CHop}}
\safemath{\bff}{H_{\CHop}}                                      
\safemath{\ircf}{R_{h}}                                         
\safemath{\scf}{R_{S}}                                          
\safemath{\tfcf}{R_{L}}                                         
\safemath{\bfcf}{R_{H}}                                         
\safemath{\mi}{I}                                                       
\safemath{\capacity}{C}                                         
\safemath{\uniform}{\mathcal{U}}                        
\safemath{\normal}{\mathcal{N}}                         
\safemath{\circnorm}{\mathcal{CN}}                      
\safemath{\mchain}{\leftrightarrow}                     
\safemath{\dB}{\,\mathrm{dB}}
\safemath{\dBm}{\,\mathrm{dBm}}
\safemath{\Hz}{\,\mathrm{Hz}}
\safemath{\kHz}{\,\mathrm{kHz}}
\safemath{\MHz}{\,\mathrm{MHz}}
\safemath{\GHz}{\,\mathrm{GHz}}
\safemath{\s}{\,\mathrm{s}}
\safemath{\ms}{\,\mathrm{ms}}
\safemath{\mus}{\,\mathrm{\mu s}}
\safemath{\ns}{\,\mathrm{ns}}
\safemath{\meter}{\,\mathrm{m}}
\safemath{\mm}{\,\mathrm{mm}}
\safemath{\cm}{\,\mathrm{cm}}
\safemath{\m}{\,\mathrm{m}}
\safemath{\W}{\,\mathrm{W}}
\safemath{\J}{\,\mathrm{J}}
\safemath{\K}{\,\mathrm{K}}
\safemath{\bit}{\,\mathrm{bit}}
\safemath{\mW}{\,\mathrm{mW}}
\safemath{\nW}{\,\mathrm{nW}}
\safemath{\pW}{\,\mathrm{pW}}
\safemath{\muW}{\,\mu\mathrm{W}}
\safemath{\Watt}{\,\mathrm{W}}
\safemath{\kbps}{\,\mathrm{kb/s}}
\safemath{\Mbps}{\,\mathrm{Mb/s}}
\safemath{\bpsHz}{\,\mathrm{b/s/Hz}}
\safemath{\define}{\triangleq}                  
\safemath{\equivalent}{\sim}
\safemath{\distas}{\sim}                                        
\safemath{\reals}{\mathbb{R}}
\safemath{\positivereals}{\mathbb{R}^{+}}
\safemath{\integers}{\mathbb{Z}}
\safemath{\posint}{\mathbb{Z}_{+}}
\safemath{\naturals}{\mathbb{N}}
\safemath{\complexset}{\mathbb{C}}
\safemath{\setA}{\mathcal{A}}
\safemath{\setB}{\mathcal{B}}
\safemath{\setC}{\mathcal{C}}
\safemath{\setD}{\mathcal{D}}
\safemath{\setE}{\mathcal{E}}
\safemath{\setF}{\mathcal{F}}
\safemath{\setG}{\mathcal{G}}
\safemath{\setH}{\mathcal{H}}
\safemath{\setI}{\mathcal{I}}
\safemath{\setJ}{\mathcal{J}}
\safemath{\setK}{\mathcal{K}}
\safemath{\setL}{\mathcal{L}}
\safemath{\setM}{\mathcal{M}}
\safemath{\setN}{\mathcal{N}}
\safemath{\setO}{\mathcal{O}}
\safemath{\setP}{\mathcal{P}}
\safemath{\setQ}{\mathcal{Q}}
\safemath{\setR}{\mathcal{R}}
\safemath{\setS}{\mathcal{S}}
\safemath{\setT}{\mathcal{T}}
\safemath{\setU}{\mathcal{U}}
\safemath{\setV}{\mathcal{V}}
\safemath{\setW}{\mathcal{W}}
\safemath{\setX}{\mathcal{X}}
\safemath{\setY}{\mathcal{Y}}
\safemath{\setZ}{\mathcal{Z}}
\safemath{\emptySet}{\varnothing}
\safemath{\bma}{\mathbf{a}}
\safemath{\bmb}{\mathbf{b}}
\safemath{\bmc}{\mathbf{c}}
\safemath{\bmd}{\mathbf{d}}
\safemath{\bme}{\mathbf{e}}
\safemath{\bmf}{\mathbf{f}}
\safemath{\bmg}{\mathbf{g}}
\safemath{\bmh}{\mathbf{h}}
\safemath{\bmi}{\mathbf{i}}
\safemath{\bmj}{\mathbf{j}}
\safemath{\bmk}{\mathbf{k}}
\safemath{\bml}{\mathbf{l}}
\safemath{\bmm}{\mathbf{m}}
\safemath{\bmn}{\mathbf{n}}
\safemath{\bmo}{\mathbf{o}}
\safemath{\bmp}{\mathbf{p}}
\safemath{\bmq}{\mathbf{q}}
\safemath{\bmr}{\mathbf{r}}
\safemath{\bms}{\mathbf{s}}
\safemath{\bmt}{\mathbf{t}}
\safemath{\bmu}{\mathbf{u}}
\safemath{\bmv}{\mathbf{v}}
\safemath{\bmw}{\mathbf{w}}
\safemath{\bmx}{\mathbf{x}}
\safemath{\bmy}{\mathbf{y}}
\safemath{\bmz}{\mathbf{z}}
\bmdefine{\biad}{a}
\bmdefine{\bibd}{b}
\bmdefine{\bicd}{c}
\bmdefine{\bidd}{d}
\bmdefine{\bied}{e}
\bmdefine{\bifd}{f}
\bmdefine{\bigd}{g}
\bmdefine{\bihd}{h}
\bmdefine{\biid}{i}
\bmdefine{\bijd}{j}
\bmdefine{\bikd}{k}
\bmdefine{\bild}{l}
\bmdefine{\bimd}{m}
\bmdefine{\bind}{n}
\bmdefine{\biod}{o}
\bmdefine{\bipd}{p}
\bmdefine{\biqd}{q}
\bmdefine{\bird}{r}
\bmdefine{\bisd}{s}
\bmdefine{\bitd}{t}
\bmdefine{\biud}{u}
\bmdefine{\bivd}{v}
\bmdefine{\biwd}{w}
\bmdefine{\bixd}{x}
\bmdefine{\biyd}{y}
\bmdefine{\bizd}{z}
\bmdefine{\bixid}{\xi}
\bmdefine{\bilambdad}{\lambda}
\bmdefine{\bimud}{\mu}
\bmdefine{\bithetad}{\theta}
\bmdefine{\biphid}{\phi}
\safemath{\bmia}{\biad}
\safemath{\bmib}{\bibd}
\safemath{\bmic}{\bicd}
\safemath{\bmid}{\bidd}
\safemath{\bmie}{\bied}
\safemath{\bmif}{\bifd}
\safemath{\bmig}{\bigd}
\safemath{\bmih}{\bihd}
\safemath{\bmii}{\biid}
\safemath{\bmij}{\bijd}
\safemath{\bmik}{\bikd}
\safemath{\bmil}{\bild}
\safemath{\bmim}{\bimd}
\safemath{\bmin}{\bind}
\safemath{\bmio}{\biod}
\safemath{\bmip}{\bipd}
\safemath{\bmiq}{\biqd}
\safemath{\bmir}{\bird}
\safemath{\bmis}{\bisd}
\safemath{\bmit}{\bitd}
\safemath{\bmiu}{\biud}
\safemath{\bmiv}{\bivd}
\safemath{\bmiw}{\biwd}
\safemath{\bmix}{\bixd}
\safemath{\bmiy}{\biyd}
\safemath{\bmiz}{\bizd}
\safemath{\bmxi}{\bixid}
\safemath{\bmlambda}{\bilambdad}
\safemath{\bmmu}{\bimud}
\safemath{\bmtheta}{\bithetad}
\safemath{\bmphi}{\biphid}
\safemath{\bA}{\mathbf{A}}
\safemath{\bB}{\mathbf{B}}
\safemath{\bC}{\mathbf{C}}
\safemath{\bD}{\mathbf{D}}
\safemath{\bE}{\mathbf{E}}
\safemath{\bF}{\mathbf{F}}
\safemath{\bG}{\mathbf{G}}
\safemath{\bH}{\mathbf{H}}
\safemath{\bI}{\mathbf{I}}
\safemath{\bJ}{\mathbf{J}}
\safemath{\bK}{\mathbf{K}}
\safemath{\bL}{\mathbf{L}}
\safemath{\bM}{\mathbf{M}}
\safemath{\bN}{\mathbf{N}}
\safemath{\bO}{\mathbf{O}}
\safemath{\bP}{\mathbf{P}}
\safemath{\bQ}{\mathbf{Q}}
\safemath{\bR}{\mathbf{R}}
\safemath{\bS}{\mathbf{S}}
\safemath{\bT}{\mathbf{T}}
\safemath{\bU}{\mathbf{U}}
\safemath{\bV}{\mathbf{V}}
\safemath{\bW}{\mathbf{W}}
\safemath{\bX}{\mathbf{X}}
\safemath{\bY}{\mathbf{Y}}
\safemath{\bZ}{\mathbf{Z}}
\bmdefine{\biAd}{A}
\bmdefine{\biBd}{B}
\bmdefine{\biCd}{C}
\bmdefine{\biDd}{D}
\bmdefine{\biEd}{E}
\bmdefine{\biFd}{F}
\bmdefine{\biGd}{G}
\bmdefine{\biHd}{H}
\bmdefine{\biId}{I}
\bmdefine{\biJd}{J}
\bmdefine{\biKd}{K}
\bmdefine{\biLd}{L}
\bmdefine{\biMd}{M}
\bmdefine{\biOd}{N}
\bmdefine{\biPd}{O}
\bmdefine{\biQd}{P}
\bmdefine{\biRd}{R}
\bmdefine{\biSd}{S}
\bmdefine{\biTd}{T}
\bmdefine{\biUd}{U}
\bmdefine{\biVd}{V}
\bmdefine{\biWd}{W}
\bmdefine{\biXd}{X}
\bmdefine{\biYd}{Y}
\bmdefine{\biZd}{Z}
\bmdefine{\biDelta}{\Delta}
\bmdefine{\biLambda}{\Lambda}
\bmdefine{\biPhi}{\Phi}
\bmdefine{\biSigma}{\Sigma}
\bmdefine{\biOmega}{\Omega}
\bmdefine{\biTheta}{\Theta}
\safemath{\bimA}{\biAd}
\safemath{\bimB}{\biBd}
\safemath{\bimC}{\biCd}
\safemath{\bimD}{\biDd}
\safemath{\bimE}{\biEd}
\safemath{\bimF}{\biFd}
\safemath{\bimG}{\biGd}
\safemath{\bimH}{\biHd}
\safemath{\bimI}{\biId}
\safemath{\bimJ}{\biJd}
\safemath{\bimK}{\biKd}
\safemath{\bimL}{\biLd}
\safemath{\bimM}{\biMd}
\safemath{\bimN}{\biNd}
\safemath{\bimO}{\biOd}
\safemath{\bimP}{\biPd}
\safemath{\bimQ}{\biQd}
\safemath{\bimR}{\biRd}
\safemath{\bimS}{\biSd}
\safemath{\bimT}{\biTd}
\safemath{\bimU}{\biUd}
\safemath{\bimV}{\biVd}
\safemath{\bimW}{\biWd}
\safemath{\bimX}{\biXd}
\safemath{\bimY}{\biYd}
\safemath{\bimZ}{\biZd}
\safemath{\bDelta}{\bielta}
\safemath{\bLambda}{\biLambda}
\safemath{\bPhi}{\biPhi}
\safemath{\bSigma}{\biSigma}
\safemath{\bOmega}{\biOmega}
\safemath{\bTheta}{\biTheta}
\safemath{\veca}{\bma}
\safemath{\vecb}{\bmb}
\safemath{\vecc}{\bmc}
\safemath{\vecd}{\bmd}
\safemath{\vece}{\bme}
\safemath{\vecf}{\bmf}
\safemath{\vecg}{\bmg}
\safemath{\vech}{\bmh}
\safemath{\veci}{\bmi}
\safemath{\vecj}{\bmj}
\safemath{\veck}{\bmk}
\safemath{\vecl}{\bml}
\safemath{\vecm}{\bmm}
\safemath{\vecn}{\bmn}
\safemath{\veco}{\bmo}
\safemath{\vecp}{\bmp}
\safemath{\vecq}{\bmq}
\safemath{\vecr}{\bmr}
\safemath{\vecs}{\bms}
\safemath{\vect}{\bmt}
\safemath{\vecu}{\bmu}
\safemath{\vecv}{\bmv}
\safemath{\vecw}{\bmw}
\safemath{\vecx}{\bmx}
\safemath{\vecy}{\bmy}
\safemath{\vecz}{\bmz}
\safemath{\vecZero}{\bZero}
\safemath{\vecxi}{\bmxi}
\safemath{\veclambda}{\bmlambda}
\safemath{\vecmu}{\bmmu}
\safemath{\vectheta}{\bmtheta}
\safemath{\vecphi}{\bmphi}
\safemath{\matA}{\bA}
\safemath{\matB}{\bB}
\safemath{\matC}{\bC}
\safemath{\matD}{\bD}
\safemath{\matE}{\bE}
\safemath{\matF}{\bF}
\safemath{\matG}{\bG}
\safemath{\matH}{\bH}
\safemath{\matI}{\bI}
\safemath{\matJ}{\bJ}
\safemath{\matK}{\bK}
\safemath{\matL}{\bL}
\safemath{\matM}{\bM}
\safemath{\matN}{\bN}
\safemath{\matO}{\bO}
\safemath{\matP}{\bP}
\safemath{\matQ}{\bQ}
\safemath{\matR}{\bR}
\safemath{\matS}{\bS}
\safemath{\matT}{\bT}
\safemath{\matU}{\bU}
\safemath{\matV}{\bV}
\safemath{\matW}{\bW}
\safemath{\matX}{\bX}
\safemath{\matY}{\bY}
\safemath{\matZ}{\bZ}
\safemath{\matZero}{\bZero}
\safemath{\matDelta}{\bDelta}
\safemath{\matLambda}{\bLambda}
\safemath{\matPhi}{\bPhi}
\safemath{\matSigma}{\bSigma}
\safemath{\matOmega}{\bOmega}
\safemath{\matTheta}{\bTheta}
\safemath{\matIdentity}{\matI}
\newcommand{\sectionname}{Sec.}
\renewcommand{\figurename}{Fig.}
\newcommand{\theoremname}{Theorem}
\newcommand{\equationname}{Eq.}
\newcommand{\equationsname}{Eqs.}
\newcommand{\covs}{\mathbf{\bSigma}}
\safemath{\sourceindex}{s}
\safemath{\othersource}{s'}
\safemath{\sourceset}{\setS}
\safemath{\sourcesubset}{\setL}
\safemath{\sourcenumber}{S}
\safemath{\destindex}{d}
\safemath{\relayindex}{r}
\safemath{\otherrelay}{r'}
\safemath{\relayset}{\setR}
\safemath{\relaysubset}{\setT}
\safemath{\relaynumber}{R}
\safemath{\setname}{\setA}
\safemath{\timeindex}{t}
\safemath{\powerSymbol}{p}
\safemath{\upperboundSymbol}{\mathcal{C}}
\safemath{\capacitySymbol}{C}
\safemath{\SNRSymbol}{\gamma}
\safemath{\pathgainSymbol}{h}
\safemath{\noiseSymbol}{Z}
\safemath{\noisePower}{\sigma^2_w}
\safemath{\entropy}{\mathrm{H}}
\safemath{\diffentropy}{h}
\newcommand{\information}{\mathrm{I}}
\safemath{\corrsrSymbol}{\rho}
\safemath{\corrrrSymbol}{\delta}
\safemath{\corrssSymbol}{\lambda}
\safemath{\varSpaceSymbol}{\mathcal{X}}
\safemath{\auxiliaryvarSpace}{\mathcal{Z}}
\safemath{\othervarSpaceSymbol}{\mathcal{Y}}
\safemath{\estimationvarSpaceSymbol}{\widehat{\mathcal{Y}}}
\safemath{\varSymbol}{X}
\safemath{\othervarSymbol}{Y}
\safemath{\auxvarSymbol}{\widehat{Y}}
\safemath{\auxiliaryvar}{Z}
\safemath{\estimationvarSymbol}{\widehat{\othervarSymbol}}
\safemath{\auxvarsampleSymbol}{z}
\safemath{\varsetSymbol}{\boldsymbol{X}}
\safemath{\othervarsetSymbol}{\boldsymbol{Y}}
\safemath{\estimationvarsetSymbol}{\widehat{\othervarsetSymbol}}
\safemath{\varsampleSymbol}{x}
\safemath{\othervarsampleSymbol}{y}
\safemath{\auxiliaryvarsample}{z}
\safemath{\estimationvarsampleSymbol}{\hat{\othervarsampleSymbol}}
\safemath{\varsetsampleSymbol}{\boldsymbol{x}}
\safemath{\othervarsetsampleSymbol}{\boldsymbol{y}}
\safemath{\estimationvarsetsampleSymbol}{\hat{\othervarsampleSymbol}}
\safemath{\varSequence}{X}
\safemath{\estimationvarSequence}{\widehat{X}}
\safemath{\othervarSequence}{Y}
\safemath{\auxiliaryvarSequence}{Z}
\safemath{\distortion}{d}
\safemath{\distortionLimit}{D}
\safemath{\rateSymbol}{R}
\safemath{\estimationrateSymbol}{\widehat{R}}
\safemath{\assistrateSymbol}{\tilde{R}}
\safemath{\blockindex}{b}
\safemath{\blocknumber}{B}
\safemath{\signalsetSymbol}{\mathcal{W}}
\safemath{\signalnumber}{n}
\safemath{\signalSymbol}{w}
\safemath{\signalsSymbol}{\boldsymbol{w}}
\safemath{\estimationparamSymbol}{k}
\safemath{\assistparamSymbol}{z}
\safemath{\gaussian}{\mathcal{N}}
\begin{document}

\title{\LARGE{Upper{-}Bounding the Capacity of Relay Communications{ - }Part I}}

\author{
  \IEEEauthorblockN{Farshad Shams}
  \IEEEauthorblockA{Dep. of Computer Science and Engineering\\
    IMT Institute for Advanced Studies Lucca, Italy\\
    Email: f.shams@imtlucca.it}
  \and
  \IEEEauthorblockN{Marco Luise}
  \IEEEauthorblockA{Dipartimento di Ingegneria dell'Informazione\\
    University of Pisa, Italy\\
    Email: marco.luise@iet.unipi.it}
}

\maketitle

\begin{abstract}
This paper focuses on the capacity of point{-}to{-}point relay communications wherein the transmitter is assisted by an intermediate relay. We detail the mathematical model of cutset and amplify and forward (AF) relaying strategy. 
We present the upper bound capacity of each relaying strategy from information theory viewpoint and also in networks with Gaussian channels.
We exemplify various outer region capacities of the addressed strategies with two different case studies. The results exhibit that in low signal{-}to{-}noise ratio (SNR) environments the cutset performance is better than amplify and forward strategy.
\end{abstract}

\section{Introduction}
Cooperative communication enables single{-}antenna mobiles in a multi{-}user environment to share their antennas and generate a virtual multiple{-}antenna transmitter, and consequently exploit some of the benefits of multiple{-}input multiple{-}output (MIMO) systems. Cooperative transmission can increase the data rate, save transmission power, and extend the coverage range of the network. As a result, it is considered to be a key{-}technique in the development of a robust and efficient communication system \cite{Nosratinia04}. The first idea of cooperative transmission can be traced back to the proposal of the relay channel model, which consists of one source, one destination and one relay.
A relay channel models transmissions consists of a pair of transceivers communicate assisted by one intermediate node. To understand how much performance improvement can be obtained by a cooperative network, we use an information theoretical study. Such a study also results how reliable communications should take place in future wireless communications. To that end, we study the information theoretic aspects around peer{-}to{-}peer relayed communications. We calculate the outer region bound of cutset and AF relaying strategy.


The rest of this contribution is structured as follows. In \sectionname~\ref{sec:relay111}, we study relay assisted communication between a transmitter and a far destination which is out of its transmission range. Then, we extend the first scenario to a network wherein the direct{-}link between the transmitter and destination is available. We present an upper bound on the cutset capacity of relay communications in \sectionname~\ref{sec:cutset111}. \sectionname~\ref{sec:AFTech111} is devoted to the well{-}known relaying strategies: the amplify and forward (AF). We formally define these relaying protocols and calculate the upper bound capacity for each of them. We illustrate our results in \sectionname~\ref{sec:casestudy111}, and then conclude in \sectionname~\ref{sec:relay111dis}.

\begin{notation}
Upper case letters $\var[i], \othervar[i]$ represent the output and input random vector variables of node $i$, respectively, and $\var[i][\blockindex]$ is the $\blockindex${th} entry of a random vector. 
The notation $\pathgain[i]{j}$ is used for the real{-}valued channel gain of the link between nodes $i$ and $j$. The symbol sent by transmitter $\sourceindex$ in block index $\blockindex$ is represented by $\signal[\sourceindex]{\blockindex}$. We use calligraphic letters $\varSpace[], \othervarSpace[], \signalset[]$ to indicate (finite) alphabet spaces and lower case letters $\varsample[i], \othervarsample[i]$ for channel distributions. 
The conventions $\varset[\setname]$ and $\varsetsample[\setname]$ denote joint vector variable and channel distribution, respectively of the indices belonging to set $\setname$. $\setname^C$ represents the complementary set of $\setname$.
The notation $\capacitySymbol(\SNR[i]{j})=\frac{1}{2}\log_{2}\left(1+\SNR[i]{j}\right)$ denotes Shannon channel capacity between nodes $i$ and $j$ with SNR $\SNR[i]{j}$ at the receiver.  The symbol $\upperboundSymbol$ is used for the outer region channel capacity.
To compute (conditional) mutual information, we need the statistical parameter:
\begin{equation}\label{eq:detcov}
\covs\lefto[\varset[\setA]|\othervarset[\setB]\right]\triangleq\det\{\Sigma_{\varset[\setA]|\othervarset[\setB]}\}\\
\end{equation}
wherein $\det\{\}$ denotes the determinant of a matrix and
\begin{subequations}
\begin{gather}
\Sigma_{\varset[\setA]|\othervarset[\setB]} \,=  \,\Sigma_{\varset[\setA]\varset[\setA]} - \Sigma_{\varset[\setA]\othervarset[\setB]}\cdot\inv{\left(\Sigma_{\othervarset[\setB]\othervarset[\setB]}\right)}\cdot\Sigma_{\othervarset[\setB]\varset[\setA]}\\
\Sigma = \left[ \begin{array}{ccc} \Sigma_{\varset[\setA]\varset[\setA]} & \Sigma_{\varset[\setA]\othervarset[\setB]} \\
\Sigma_{\othervarset[\setB]\varset[\setA]} & \Sigma_{\othervarset[\setB]\othervarset[\setB]} \end{array}\right]
\end{gather}
\end{subequations}
wherein $\Sigma$ denotes the covariance matrix of multivariate Gaussian distribution $\varset[\setA]$ jointly with $\othervarset[\setB]$ \cite[Ch. 5]{Allan09}.
\hfill$\blacksquare$
\end{notation}

\begin{notation}\label{nota:scaledpathgain}
The power gain of the (real{-}valued) radio link between two nodes $i$ and $j$ at a distance $d_{ij}$, is scaled by
\begin{equation}
\Gamma_{ij}=G_{tx}\cdot G_{rx}\cdot\left(\frac{\lambda_{0}}{4\pi}\right)^{2}\cdot d^{-\alpha}_{ij}
\end{equation}
wherein $G_{tx}=G_{rx}=1$ are the transmit and receive antennas gains assumed omnidirectional, the parameter $\lambda_{0}\!\approxeq\!0.12\text{m}$ represents carrier wave length, and the path loss exponent is $\alpha=2$. Consequently, if the node $i$ is placed close to node $j$ and far from node $k$ then $\pathgain[i]{j}\!\gg\!\pathgain[i]{k}$. \hfill$\blacksquare$
\end{notation}

\section{Relayed point{-}to{-}point communication}\label{sec:relay111}
The relays are able to outperform a direct{-}link communication and even actualize some otherwise impossible scenario. For example,
\figurename~\ref{fig:fardest} depicts the simplest cooperative (relay) communication model to realize a communication between two hidden terminals. The source $\sourceindex$ wishes to send a message to the far destination $\destindex$ which is out of sight.  We assume there is no fading on the wireless channels. One possible solution is to use an intermediate node. The source $\sourceindex$ sends a message to the relay $\relayindex$ and the received noisy version of the original message is re{-}transmitted to the far destination node. We assume that the relay is accessible to both the source and destination nodes. In the example, the relay does not do any processing (e.g. encoding, decoding) on the receive signal, but its duty is to make possible the information exchange from $\sourceindex$ to $\destindex$. This simple two{-}hop model enlarges significantly the range of the network.
\begin{figure}
  \begin{center}
    \psfrag{s}[r][b][0.9]{$\sourceindex$}
    \psfrag{r}[c][b][0.9]{$\relayindex$}
    \psfrag{d}[l][b][0.9]{$\destindex$}
    \psfrag{Zr}[c][b][0.85]{$\noise[\relayindex]$}
    \psfrag{Zd}[c][b][0.85]{$\noise[\destindex]$}
    \psfrag{hsr}[c][b][0.85]{$\sqrt{\pathgain[\sourceindex]{\relayindex}}$}
    \psfrag{hrd}[c][b][0.85]{${\sqrt{\pathgain[\relayindex]{\destindex}}}$}
    \includegraphics[width=0.7\columnwidth]{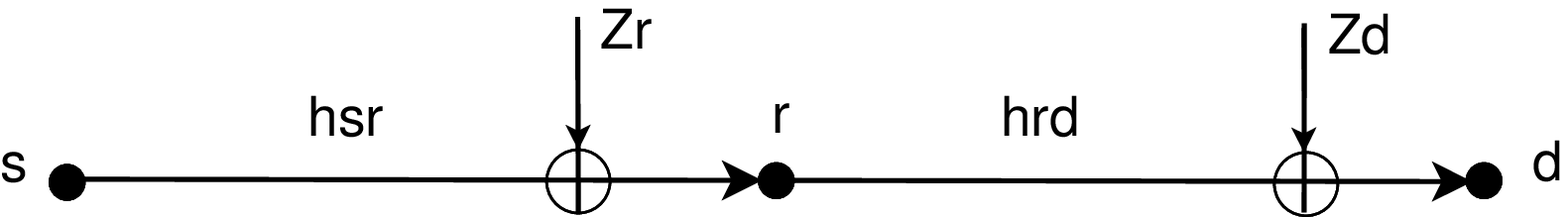}
  \end{center}
  \caption{One source ($\sourceindex$), one relay ($\relayindex$), one far destination ($\destindex$) network.}
  \label{fig:fardest}
\end{figure}
The relaying function between the source and destination nodes can be done in two different signaling modes: \emph{half{-}duplex} and \emph{full{-}duplex}. In half{-}duplex mode, the transmission from source to destination is done in two different stages and in each stage the relay acts either as receiver or transmitter. In the first stage, $\sourceindex$ transmits a stream of information and $\relayindex$ operates as a receiver and $\destindex$ is idle. In the second stage, the channel between $\sourceindex$ and $\relayindex$ is kept idle and $\destindex$ is active to receive data from the intermediate node $\relayindex$ which acts only as a transmitter. On the other hand, in full{-}duplex mode, both wireless channels are simultaneously busy and the relay plays the role of a receiver and a transmitter at the same time. The function of the intermediate node is feasible using two distinct antennas, one as receiver and one as transmitter using two orthogonal frequency band. In spite of the the difficulties to apply the full{-}duplex manner in wireless networks, in this work, we concern with this model for didactic issues. Our goal is to determine how much data we can reliably get from source to destination, placing no importance on delay.

In the scenario of \figurename~\ref{fig:fardest}, the power expenditure of $\sourceindex$ and $\relayindex$, $\power[\sourceindex]{}$ and $\power[\relayindex]{}$, respectively can be set a priori or can be adjusted given the individual power constraints $\powermax[\sourceindex]{}$ and $\powermax[\relayindex]{}$ and to the path gain values $\pathgain[\sourceindex]{\relayindex}$ and $\pathgain[\relayindex]{\destindex}$. The objective of the power control is to approach the maximum Shannon channel capacity between source and a (far) destination. According to the max{-}flow min{-}cut theorem (also referred to as the cutset bound), the maximum end{-}to{-}end channel capacity in \figurename~\ref{fig:fardest} is achieved when the capacity of the source{-}relay link is the same as that of the $\relayindex\to\destindex$ link, i.e.
$\capacitySymbol(\SNR[\sourceindex]{\relayindex})=\capacitySymbol(\SNR[\relayindex]{\destindex})$. 
We assume that the noises $\noise[\relayindex]$ and $\noise[\destindex]$ are Gaussian random variables $\gaussian(0, \noisePower)$. The power optimization problem becomes to the equation:
\begin{equation}\label{eq:fardest}
\pathgain[\sourceindex]{\relayindex}\power[\sourceindex]{}=\pathgain[\relayindex]{\destindex}\power[\relayindex]{}
\end{equation}

We study the final equation in four different cases:
\begin{enumerate}
  \item if $(\pathgain[\sourceindex]{\relayindex}\!\le\!\pathgain[\relayindex]{\destindex})$ and $(\powermax[\sourceindex]{}< \powermax[\relayindex]{})$ then\;$\power[\sourceindex]{}=\powermax[\sourceindex]{},\; \power[\relayindex]{}=\displaystyle\frac{\pathgain[\sourceindex]{\relayindex}}{\pathgain[\relayindex]{\destindex}}\powermax[\relayindex]{}$~;
  \item if $(\pathgain[\sourceindex]{\relayindex}\!<\!\pathgain[\relayindex]{\destindex})$ and $(\powermax[\sourceindex]{}\ge \powermax[\relayindex]{})$ then\;$\power[\sourceindex]{}=\powermax[\relayindex]{},\; \power[\relayindex]{}=\displaystyle\frac{\pathgain[\sourceindex]{\relayindex}}{\pathgain[\relayindex]{\destindex}}\powermax[\relayindex]{}$~;
  \item if $(\pathgain[\sourceindex]{\relayindex}\!\ge\!\pathgain[\relayindex]{\destindex})$ and $(\powermax[\sourceindex]{}> \powermax[\relayindex]{})$ then\; $\power[\sourceindex]{}=\displaystyle\frac{\pathgain[\relayindex]{\destindex}}{\pathgain[\sourceindex]{\relayindex}}\powermax[\sourceindex]{},\; \power[\relayindex]{}=\powermax[\relayindex]{}$~;
  \item if $(\pathgain[\sourceindex]{\relayindex}\!>\!\pathgain[\relayindex]{\destindex})$ and $(\powermax[\sourceindex]{}\le \powermax[\relayindex]{})$ then\; $\power[\sourceindex]{}=\displaystyle\frac{\pathgain[\relayindex]{\destindex}}{\pathgain[\sourceindex]{\relayindex}}\powermax[\sourceindex]{},\; \power[\relayindex]{}=\powermax[\sourceindex]{}$~.
\end{enumerate}

Case (1) is the situation in which the maximum capacity of the source{-}relay link is less than that relay{-}destination link. The source exploits the maximum capacity of the source{-}relay link, i.e. $\power[\sourceindex]{}=\powermax[\sourceindex]{}$. To adjust $\power[\relayindex]{}$ it is enough to satisfy \equationname~\ref{eq:fardest}, since the relay{-}destination channel capacity is limited to the capacity of source{-}relay link. The same derivation applies to case (3). In case (2) there is no exact relation between the maximum channel capacity of the two channels. Assigning $\power[\sourceindex]{}=\powermax[\relayindex]{}$, the source{-}relay capacity is bounded by $\pathgain[\sourceindex]{\relayindex}\powermax[\relayindex]{}$ that is less than the maximum capacity of the source{-}relay link. At this point it is easy to satisfy \equationname~\ref{eq:fardest}. Case (4) follows from (2).

Cooperative communication can be efficient also when a direct{-}link between the source node $\sourceindex$ and destination $\destindex$ is available. In the network illustrated by \figurename~\ref{fig:onesonerencdec}, when the source node $\sourceindex$ broadcasts, a noisy version of the data comes to the relay $\relayindex$ and another corrupted version of data approaches the destination $\destindex$. Using an intermediate node is useful when the received data at $\destindex$ is too weak to be decoded. In this situation, a relay can help communication by transmitting a new version of its own received signal. The direct{-}link signal is used to help decoding the stronger version of original data sent by relay.
\begin{figure}
  \begin{center}
    \psfrag{s}[r][b]{$\sourceindex$}
    \psfrag{r}[c][b]{$\relayindex$}
    \psfrag{d}[l][b]{$\destindex$}
    \psfrag{Enc}[c][b][0.7]{Encoder}
    \psfrag{Dec}[c][b][0.7]{Decoder}
    \psfrag{Zr}[c][b][0.75][48]{$\noise[\relayindex]$}
    \psfrag{Zd}[c][b][0.75]{$\noise[\destindex]$}
    \psfrag{hsr}[c][b][0.75][45]{$\sqrt{\pathgain[\sourceindex]{\relayindex}}$}
    \psfrag{hrd}[c][b][0.75][-35]{$\sqrt{\pathgain[\relayindex]{\destindex}}$}
    \psfrag{hsd}[c][b][0.75]{$\sqrt{\pathgain[\sourceindex]{\destindex}}$}
    \psfrag{xs}[c][b][0.8]{$\var[\sourceindex]$}
    \psfrag{yr}[c][b][0.8]{$\othervar[\relayindex]$}
    \psfrag{xr}[c][b][0.8]{$\var[\relayindex]$}
    \psfrag{yd}[c][b][0.8]{$\othervar[\destindex]$}
    \psfrag{ws}[c][b][0.75]{$\signal[\sourceindex]{}$}
    \psfrag{wd}[c][b][0.75]{$\destsignal[\sourceindex]{}$}
    \includegraphics[width=0.8\columnwidth]{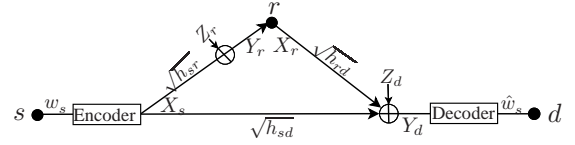}
  \end{center}
  \caption{One source, one relay, one destination network scenario.}
  \label{fig:onesonerencdec}
\end{figure}

The cooperative network in \figurename~\ref{fig:onesonerencdec} consists of four finite random spaces: $\varSpace[\sourceindex]$ at the transmitter, $\othervarSpace[\relayindex]$ and $\varSpace[\relayindex]$ at the relay node, and $\othervarSpace[\destindex]$ at the destination. 
The source node wants to transmit a message $\signal[\sourceindex]{}$ to the destination through direct and reliable links. The original message $\signal[\sourceindex]{}$ is split in a sequence of sub{-}messages $\signal[\sourceindex]{1},\dots,\signal[\sourceindex]{\blockindex},\dots,\signal[\sourceindex]{\blocknumber}$ each uniformly and independently drawn from a set with alphabet size $\signalnumber$ and length $\rate[\sourceindex]{}$, represented by ${\signalset[\sourceindex]=\{0, 1, ..., 2^{\signalnumber\rate[\sourceindex]{}}-1\}}$. The encoder at the transmitter is a function $\signalset[\sourceindex]\to\varSpace[\sourceindex]$ which maps $\signal[\sourceindex]{\blockindex}$ to $\var[\sourceindex][\blockindex]\left(\signal[\sourceindex]{\blockindex}\right)$.
We assume that the decoder generates a sequence of $\blocknumber$ Gaussian random codewords given a constraint on average power as $\expectation[{\var[\sourceindex]^{2}}]\le\powermax[\sourceindex]{}$. We also assume $\expectation[{\var[\sourceindex]}]=0$.
In each block with index \blockindex, the source $\sourceindex$ broadcasts the encoded symbols and each $\var[\sourceindex][\blockindex]\left(\signal[\sourceindex]{\blockindex}\right)$ experiences two different paths to approach the relay and destination. The relay observes $\othervar[\relayindex][\blockindex]$ as:
\begin{align}\label{eq:srbroadcast}
\othervar[\relayindex][\blockindex] = \sqrt{\pathgain[\sourceindex]{\relayindex}}\var[\sourceindex][\blockindex]\left(\signal[\sourceindex]{\blockindex}\right) + \noise[\relayindex]
\end{align}

The receive signals at $\relayindex$ and $\destindex$ are different but statistically correlated. In full{-}duplex mode, the relay processes the received signal in the previous block index and generates the information $\var[\relayindex][\blockindex]\left(\signal[\sourceindex]{\blockindex-1}\right)$, to be sent into the {relay{-}destination} channel. The information $\var[\relayindex][\blockindex]\left(\signal[\sourceindex]{\blockindex-1}\right)$ is a re{-}generated version of $\othervar[\relayindex][\blockindex-1]$ and it is the output of the relay $\relayindex$'s deterministic function whose input is the sequence of the previous received signals:
\begin{align}\label{eq:deterministinfunc}
\var[\relayindex][\blockindex]\left(\signal[\sourceindex]{\blockindex-1}\right) = \relayfunction[\relayindex]{\blockindex}(\othervar[\relayindex][\blockindex-1], \othervar[\relayindex][\blockindex - 2], \dots,\othervar[\relayindex][1]);
\end{align}
The function $\relayfunction[]{}$ depends on the specific cooperative strategy as will be specified later on.
Each symbol sequence $\{\var[\relayindex][\blockindex]\left(\signal[\sourceindex]{\blockindex-1}\right)\}$ is such that $\expectation[{\var[\relayindex]^{2}}]\le\powermax[\relayindex]{}$, and $\expectation[{\var[\relayindex]}]=0$.
The destination receives a superposition of two different signals:
\begin{align}\label{eq:destreceive}
\othervar[\destindex][\blockindex] = \sqrt{\pathgain[\sourceindex]{\destindex}}\var[\sourceindex][\blockindex]\left(\signal[\sourceindex]{\blockindex}\right) + \sqrt{\pathgain[\relayindex]{\destindex}}\var[\relayindex][\blockindex]\left(\signal[\sourceindex]{\blockindex-1}\right)+ \noise[\destindex]
\end{align}

It is seen that, the destination node receives information both about $\signal[\sourceindex]{\blockindex}$ and $\signal[\sourceindex]{\blockindex-1}$. It means, the destination receives two different versions of $\signal[\sourceindex]{\blockindex}$ in two broadcast and multiple-access stages. The decoder at the destination decodes the message $\destsignal[\sourceindex]{}$ canceling the effect of $\var[\sourceindex][\blockindex-1]\left(\signal[\sourceindex]{\blockindex-1}\right)$ from $\var[\relayindex][\blockindex]\left(\signal[\sourceindex]{\blockindex-1}\right)$. In particular, the decoding function is a mapping function from $\othervarSpace[\destindex]\rightarrow\displaystyle\signalset[\sourceindex]$.
The error probability at the destination's decoder is defined as:
\begin{align}
P_{e}^{\signalnumber}=\displaystyle{2^{-\signalnumber\rate[\sourceindex]{}}}\cdot\sum_{\signal[\sourceindex]{}}{\text{Pr }\{\destsignal[\sourceindex]{}\neq\signal[\sourceindex]{}|\signal[\sourceindex]{}\,\text{was sent}\}}
\end{align}
which is defined based on the assumption that the messages are independent and uniformly distributed over the alphabet space. The rate $\rate[\sourceindex]{}$ is achievable if there exists a sequence of codes $\left(\signalnumber,2^{\signalnumber\rate[\sourceindex]{}}\right)$ for which $P_e^{\signalnumber\to\infty}$ is arbitrarily close to zero.



\section{Cutset upper bound}\label{sec:cutset111}

In this section we derive the upper bound capacity of max{-}flow min{-}cut or ``cutset". The cutset upper bound is used as a reference to compare the upper bound of the realistic models. M. R. Aref in \cite[Th. 3.4]{aref-thesis} pioneered to establish the cutset bound in a general reliable network with multiple relays. S. Zahedi in \cite[Th. 2.2]{zahedi-thesis} presents another proof to the cutset upper bound in the one relay case like \figurename~\ref{fig:onesonerencdec}.
The following proposition shows that a cooperative system can be decomposed into a broadcast channel from the source node's viewpoint, and a multiple{-}access channel from the destination point of view.
\begin{proposition}\label{pr:cutsetinf}
\cite[Th. 2.2]{zahedi-thesis} For any relay channel $(\varSpace[\sourceindex]\times\varSpace[\relayindex], p(\othervarsample[\destindex]\,\othervarsample[\relayindex]\,|\,\varsample[\sourceindex]\,\varsample[\relayindex]), \othervarSpace[\relayindex]\times\othervarSpace[\destindex])$ the cutset capacity is upper bounded by
\begin{equation*}
\upperbound[cutset]{(\sourceindex;\relayindex;\destindex)}\!=\!
\underset{p(\varsample[\sourceindex], \varsample[\relayindex])}{\sup}\min\!\left\{\!\information(\var[\sourceindex];\, \othervar[\destindex]\,\othervar[\relayindex] | \var[\relayindex]),\, \information(\var[\sourceindex]\,\var[\relayindex];\,\othervar[\destindex])\!\right\}
\end{equation*}
where the supremum is computed over all joint distributions on $\varSpace[\sourceindex]\times\varSpace[\relayindex]$ complying with individual power constraints.
\end{proposition}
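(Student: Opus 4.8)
The statement is a converse (upper) bound, so the plan is the classical two-step argument of network information theory: Fano's inequality to pass from a vanishing decoding error to a mutual-information inequality, followed by a single-letterization that exposes the two ``cut'' terms. Suppose the rate $\rate[\sourceindex]{}$ is achievable, so that over $\blocknumber$ uses of the channel the error probability at $\destindex$ tends to zero; Fano's inequality then gives $\entropy(\signal[\sourceindex]{}\,|\,\othervar[\destindex]^{\blocknumber})\le\blocknumber\varepsilon_{\blocknumber}$ with $\varepsilon_{\blocknumber}\to 0$, where $\othervar[\destindex]^{\blocknumber}$ abbreviates the received sequence $(\othervar[\destindex][1],\dots,\othervar[\destindex][\blocknumber])$ at $\destindex$ and, likewise, $\othervar[\relayindex]^{\blockindex}$ denotes the relay's observations up to block $\blockindex$. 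Consequently,
\begin{equation*}
\blocknumber\,\rate[\sourceindex]{}=\entropy(\signal[\sourceindex]{})\le\information(\signal[\sourceindex]{};\,\othervar[\destindex]^{\blocknumber})+\blocknumber\varepsilon_{\blocknumber},
\end{equation*}
and everything reduces to bounding $\information(\signal[\sourceindex]{};\,\othervar[\destindex]^{\blocknumber})$ by each of the two single-letter expressions inside the minimum. The two features of the model that make this possible are that the relay is causal and deterministic --- $\var[\relayindex][\blockindex]$ is a function of $\othervar[\relayindex]^{\blockindex-1}$ only, by \equationname~\ref{eq:deterministinfunc} --- and that the broadcast/multiple-access channel $p(\othervarsample[\destindex]\,\othervarsample[\relayindex]\,|\,\varsample[\sourceindex]\,\varsample[\relayindex])$ acts memorylessly, by \equationname~\ref{eq:srbroadcast} and \equationname~\ref{eq:destreceive}.

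For the broadcast cut $\information(\var[\sourceindex];\,\othervar[\destindex]\,\othervar[\relayindex]\,|\,\var[\relayindex])$ I would first enlarge the observation, $\information(\signal[\sourceindex]{};\,\othervar[\destindex]^{\blocknumber})\le\information(\signal[\sourceindex]{};\,\othervar[\destindex]^{\blocknumber}\,\othervar[\relayindex]^{\blocknumber})$, expand it blockwise by the chain rule into $\sum_{\blockindex}\information(\signal[\sourceindex]{};\,\othervar[\destindex][\blockindex]\,\othervar[\relayindex][\blockindex]\,|\,\othervar[\destindex]^{\blockindex-1}\,\othervar[\relayindex]^{\blockindex-1})$, insert $\var[\relayindex][\blockindex]$ into the conditioning at no cost (it is a deterministic function of $\othervar[\relayindex]^{\blockindex-1}$), and then invoke memorylessness, which yields the Markov chain $(\signal[\sourceindex]{},\othervar[\destindex]^{\blockindex-1},\othervar[\relayindex]^{\blockindex-1})-(\var[\sourceindex][\blockindex],\var[\relayindex][\blockindex])-(\othervar[\destindex][\blockindex],\othervar[\relayindex][\blockindex])$; a data-processing step then bounds the $\blockindex$-th summand by $\information(\var[\sourceindex][\blockindex];\,\othervar[\destindex][\blockindex]\,\othervar[\relayindex][\blockindex]\,|\,\var[\relayindex][\blockindex])$. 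For the multiple-access cut $\information(\var[\sourceindex]\,\var[\relayindex];\,\othervar[\destindex])$ I would instead expand $\information(\signal[\sourceindex]{};\,\othervar[\destindex]^{\blocknumber})=\sum_{\blockindex}\information(\signal[\sourceindex]{};\,\othervar[\destindex][\blockindex]\,|\,\othervar[\destindex]^{\blockindex-1})$, enlarge the conditioning by $\othervar[\relayindex]^{\blockindex-1}$, and use that $(\var[\sourceindex][\blockindex],\var[\relayindex][\blockindex])$ is a deterministic function of $(\signal[\sourceindex]{},\othervar[\relayindex]^{\blockindex-1})$ --- the source encoder together with the relay map of \equationname~\ref{eq:deterministinfunc} --- so that memorylessness again bounds each summand by $\information(\var[\sourceindex][\blockindex]\,\var[\relayindex][\blockindex];\,\othervar[\destindex][\blockindex])$. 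Thus $\information(\signal[\sourceindex]{};\,\othervar[\destindex]^{\blocknumber})$ is simultaneously at most $\sum_{\blockindex}\information(\var[\sourceindex][\blockindex];\,\othervar[\destindex][\blockindex]\,\othervar[\relayindex][\blockindex]\,|\,\var[\relayindex][\blockindex])$ and at most $\sum_{\blockindex}\information(\var[\sourceindex][\blockindex]\,\var[\relayindex][\blockindex];\,\othervar[\destindex][\blockindex])$.

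To single-letterize I would bring in a time-sharing index $\mathcal{Q}$, uniform on $\{1,\dots,\blocknumber\}$ and independent of the message and the noises, and pass to the single-letter variables $\var[\sourceindex]=\var[\sourceindex][\mathcal{Q}]$, $\var[\relayindex]=\var[\relayindex][\mathcal{Q}]$, $\othervar[\relayindex]=\othervar[\relayindex][\mathcal{Q}]$ and $\othervar[\destindex]=\othervar[\destindex][\mathcal{Q}]$ (with $\mathcal{Q}$ either kept as a conditioning variable or absorbed into $\var[\relayindex]$). Memorylessness makes the conditional law of $(\othervar[\destindex],\othervar[\relayindex])$ given $(\var[\sourceindex],\var[\relayindex])$ equal to $p(\othervarsample[\destindex]\,\othervarsample[\relayindex]\,|\,\varsample[\sourceindex]\,\varsample[\relayindex])$, and by linearity of expectation the averaged inputs still obey $\expectation[{\var[\sourceindex]^{2}}]\le\powermax[\sourceindex]{}$ and $\expectation[{\var[\relayindex]^{2}}]\le\powermax[\relayindex]{}$. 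The two block-sums above then each equal $\blocknumber$ times the corresponding single-letter mutual information evaluated at one and the same joint distribution $p(\varsample[\sourceindex],\varsample[\relayindex])$, so that, dividing by $\blocknumber$,
\begin{equation*}
\rate[\sourceindex]{}\le\varepsilon_{\blocknumber}+\min\!\left\{\information(\var[\sourceindex];\,\othervar[\destindex]\,\othervar[\relayindex]\,|\,\var[\relayindex]),\;\information(\var[\sourceindex]\,\var[\relayindex];\,\othervar[\destindex])\right\}\le\varepsilon_{\blocknumber}+\upperbound[cutset]{(\sourceindex;\relayindex;\destindex)},
\end{equation*}
and letting $\blocknumber\to\infty$ closes the argument.

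The part requiring genuine care --- and the main obstacle --- is this closing single-letterization rather than the two cut bounds. One must check that the two block-sum estimates of $\information(\signal[\sourceindex]{};\,\othervar[\destindex]^{\blocknumber})$ are read off at the \emph{same} induced distribution, so that bounding that quantity by the minimum of the two sums produces the minimum of the two single-letter terms with no loss (a bound by a ``sum of minima'' would be strictly weaker and useless here), and one must keep the bookkeeping of the auxiliary $\mathcal{Q}$ and of the two average-power constraints consistent. The two Markov-chain reductions, by contrast, are routine once the causal and deterministic relay map of \equationname~\ref{eq:deterministinfunc} and the memorylessness of the broadcast/multiple-access channel have been used.
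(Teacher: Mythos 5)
You are proving a statement for which the paper offers no proof of its own: Proposition~\ref{pr:cutsetinf} is imported directly from \cite[Th. 2.2]{zahedi-thesis}, with the multi-relay generalization credited to \cite[Th. 3.4]{aref-thesis}, and the text proceeds immediately to interpreting the two cut terms as a broadcast bound and a multiple-access bound. Your argument reconstructs exactly the converse given in those references (and in Cover--El Gamal): Fano's inequality, the two chain-rule expansions of the mutual information between the message and the destination's received sequence corresponding to the two cuts, insertion of $\var[\relayindex][\blockindex]$ into the conditioning at no cost because of the causal deterministic relay map of \equationname~\ref{eq:deterministinfunc}, reduction of each summand to a per-block term by memorylessness, and single-letterization through a uniform time-sharing index. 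So the approach is not different from the paper's source; the paper simply omits the proof. The one step you should tighten is the closing one: after introducing $\mathcal{Q}$, the two block-averages \emph{equal} the $\mathcal{Q}$-conditioned mutual informations, not the unconditioned single-letter ones; to drop $\mathcal{Q}$ you must use that conditioning cannot increase the leading output-entropy term, while the entropy conditioned on $(\var[\sourceindex],\var[\relayindex])$ is unaffected by further conditioning on $\mathcal{Q}$ (memorylessness again). This delivers both single-letter terms at the same induced distribution $p(\varsample[\sourceindex], \varsample[\relayindex])$, with the average power constraints preserved by linearity, after which taking the minimum and then the supremum is legitimate --- precisely the point you correctly flag as the place where a ``sum of minima'' bound would be too weak. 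With that adjustment the argument is complete and matches the cited proof.
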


The first term is the mutual information of broadcasting $\var[\sourceindex]$ toward $\relayindex$ and $\destindex$ with transition probability $p(\othervarsample[\destindex]\,\othervarsample[\relayindex]|\,\varsample[\sourceindex])$. The second term is the mutual information of multiple{-}access of $\relayindex$ and $\sourceindex$ at the destination node with transition probability $p(\othervarsample[\destindex]\,|\,\varsample[\sourceindex]\,\varsample[\relayindex])$. Thus, in general, random variables $\othervarsample[\destindex]$ and $\othervarsample[\relayindex]$ are statistically related to both inputs $\varsample[\sourceindex]$ and $\varsample[\relayindex]$ through $p(\othervarsample[\destindex]\,\othervarsample[\relayindex]\,|\,\varsample[\sourceindex]\,\varsample[\relayindex])$.

$p(\varsample[\sourceindex], \varsample[\relayindex])$ is a joint Gaussian distribution on $\varSpace[\sourceindex]\times\varSpace[\relayindex]$ with a cross correlation coefficient of $\displaystyle\corrsr[]{}=\frac{\expectation[{\var[\sourceindex]\var[\relayindex]}]}{\sqrt{\expectation[{\var[\sourceindex]^{2}}]\expectation[{\var[\relayindex]^{2}}]}}$. In the case of $\corrsr[]{}=0$ we have: {$p(\varsample[\sourceindex], \varsample[\relayindex]) = p(\varsample[\sourceindex])\cdot p(\varsample[\relayindex])$}.

At this point, We recall some useful statistics equalities related to \figurename~\ref{fig:onesonerencdec}.
To review the algebraic manipulation of the following formulas, see \cite[Ch. 5]{Allan09}.
\begin{subequations}\label{eqs:onerelaystatistics}
\begin{flalign}
&\Varop\left(\othervar[\destindex]\right)=\pathgain[\sourceindex]{\destindex}\powermax[\sourceindex]{}+\pathgain[\relayindex]{\destindex}\powermax[\relayindex]{}+2\corrsr[]{}\sqrt{\pathgain[\sourceindex]{\destindex}\powermax[\sourceindex]{}\pathgain[\relayindex]{\destindex}\powermax[\relayindex]{}}+\noisePower &\\
&\;\covs\lefto[\othervar[\destindex]\othervar[\relayindex]|\var[\relayindex]\right] = \left(\pathgain[\sourceindex]{\destindex} + \pathgain[\sourceindex]{\relayindex}\right)\powermax[\sourceindex]{}\left(1-\corrsr[]{}^2\right)+\noisePower\\
&\;\covs\lefto[\othervar[\destindex]\othervar[\relayindex]|\var[\sourceindex]\var[\relayindex]\right] = \;\covs\lefto[\othervar[\destindex]|\var[\sourceindex]\var[\relayindex]\right] =\noisePower
\end{flalign}
\end{subequations}
Reference \cite[Proposition 2]{Kramer05} shows that the $\upperbound[cutset]{(\sourceindex;\relayindex;\destindex)}$ is attained by Gaussian channels. The following theorem presents the capacity of the AWGN relay channel.
\begin{theorem}\label{th:AWGNcutset}
The AWGN cutset capacity of a point{-}to{-}point relayed communication is upper bounded by:
\begin{multline}\label{eq:AWGNcutset}
\upperbound[cutset]{(\sourceindex;\relayindex;\destindex)}\!=\!\underset{-1\le\corrsr[]{}\le1}{\sup}\min \displaystyle\left\{\capacitySymbol\left(\frac{\left(\pathgain[\sourceindex]{\destindex}\!+\! \pathgain[\sourceindex]{\relayindex}\right)\powermax[\sourceindex]{}\left(1\!-\!\corrsr[]{}^2\right)}{\noisePower}\right) \right. \\
\qquad,
\left.\capacitySymbol\left(\frac{\pathgain[\sourceindex]{\destindex}\powermax[\sourceindex]{}+\pathgain[\relayindex]{\destindex}\powermax[\relayindex]{}+2\corrsr[]{}\sqrt{\pathgain[\sourceindex]{\destindex}\powermax[\sourceindex]{}\pathgain[\relayindex]{\destindex}\powermax[\relayindex]{}}}{\noisePower}\right) \right\}
\end{multline}
\end{theorem}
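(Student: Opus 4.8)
The plan is to derive Theorem~\ref{th:AWGNcutset} from Proposition~\ref{pr:cutsetinf} by evaluating its two mutual-information terms on the Gaussian channel of \figurename~\ref{fig:onesonerencdec}. Since \cite[Proposition 2]{Kramer05} guarantees that the supremum in Proposition~\ref{pr:cutsetinf} is attained by a zero-mean jointly Gaussian input pair $(\var[\sourceindex],\var[\relayindex])$, it suffices to parametrize such a pair by the variances $\Varop(\var[\sourceindex])\le\powermax[\sourceindex]{}$, $\Varop(\var[\relayindex])\le\powermax[\relayindex]{}$ and the cross-correlation coefficient $\corrsr[]{}\in[-1,1]$, and then to maximize $\min$ of the two resulting mutual informations over these parameters.

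The next step is to put the two terms in closed form. Using the Gaussian differential-entropy identity $h(\cdot)=\tfrac12\log_2\big((2\pi e)^{k}\det\Sigma\big)$ together with the chain-rule decompositions $\information(\var[\sourceindex];\othervar[\destindex]\othervar[\relayindex]\mid\var[\relayindex])=h(\othervar[\destindex]\othervar[\relayindex]\mid\var[\relayindex])-h(\othervar[\destindex]\othervar[\relayindex]\mid\var[\sourceindex]\var[\relayindex])$ and $\information(\var[\sourceindex]\var[\relayindex];\othervar[\destindex])=h(\othervar[\destindex])-h(\othervar[\destindex]\mid\var[\sourceindex]\var[\relayindex])$, each mutual information becomes a ratio of determinants; the $2\pi e$ factors cancel, and substituting the conditional-covariance identities \eqref{eqs:onerelaystatistics} — themselves obtained by applying the Schur-complement formula of \eqref{eq:detcov} to the linear-Gaussian model \eqref{eq:srbroadcast}, \eqref{eq:destreceive} — the ratio collapses to $1+\gamma$ with $\gamma$ the relevant receive signal-to-noise ratio. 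This gives the first term as $\capacitySymbol\!\big(\tfrac{(\pathgain[\sourceindex]{\destindex}+\pathgain[\sourceindex]{\relayindex})\Varop(\var[\sourceindex])(1-\corrsr[]{}^{2})}{\noisePower}\big)$ and the second as $\capacitySymbol\!\big(\tfrac{\pathgain[\sourceindex]{\destindex}\Varop(\var[\sourceindex])+\pathgain[\relayindex]{\destindex}\Varop(\var[\relayindex])+2\corrsr[]{}\sqrt{\pathgain[\sourceindex]{\destindex}\Varop(\var[\sourceindex])\pathgain[\relayindex]{\destindex}\Varop(\var[\relayindex])}}{\noisePower}\big)$, i.e.\ the two arguments of $\min$ in \eqref{eq:AWGNcutset} with $\powermax[\sourceindex]{},\powermax[\relayindex]{}$ replaced by the actual variances.

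It then remains to justify that both variances should be set to their maximal values. The first term does not depend on $\Varop(\var[\relayindex])$ and is nondecreasing in $\Varop(\var[\sourceindex])$. For the second term, replacing $\corrsr[]{}$ by $|\corrsr[]{}|$ leaves the first term unchanged (it depends only on $\corrsr[]{}^{2}$) and cannot decrease the second, so one may assume $\corrsr[]{}\ge0$ without loss of generality; and for $\corrsr[]{}\ge0$ the second term is nondecreasing in both $\Varop(\var[\sourceindex])$ and $\Varop(\var[\relayindex])$. Hence, for each fixed $\corrsr[]{}$, the inner supremum over the variances is achieved at $\Varop(\var[\sourceindex])=\powermax[\sourceindex]{}$, $\Varop(\var[\relayindex])=\powermax[\relayindex]{}$, and the problem reduces to the one-dimensional optimization over $\corrsr[]{}\in[-1,1]$ displayed in \eqref{eq:AWGNcutset} (the objective being continuous on this compact interval, the supremum is attained).

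Most of this is bookkeeping; the two steps that require genuine care are (i) checking that \eqref{eqs:onerelaystatistics} indeed yields exactly the claimed signal-to-noise ratios, with all factors of $\noisePower$ accounted for, and (ii) the power-optimization step, where the tempting shortcut ``the mutual informations are monotone in the transmit powers'' is simply \emph{false} when $\corrsr[]{}<0$, so the reduction to $\corrsr[]{}\ge0$ must be carried out first. I expect step (ii) to be the main obstacle to a clean proof.
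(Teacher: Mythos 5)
Your proposal is correct and follows essentially the same route as the paper: evaluate the two mutual-information terms of Proposition~\ref{pr:cutsetinf} for jointly Gaussian inputs via the determinant ratios of \eqref{eqs:onerelaystatistics}, yielding the two $\capacitySymbol(\cdot)$ arguments of \eqref{eq:AWGNcutset}. Your additional step justifying full-power transmission (reducing to $\corrsr[]{}\ge0$ first, since monotonicity in the powers fails for $\corrsr[]{}<0$) is a worthwhile refinement that the paper's proof silently omits by plugging $\powermax[\sourceindex]{},\powermax[\relayindex]{}$ directly into the statistics.
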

\begin{proof} The mutual information terms are calculated using \equationsname~\ref{eqs:onerelaystatistics}.
\begin{flalign*}
&\information(\var[\sourceindex]\,;\,\othervar[\destindex]\,\othervar[\relayindex] | \var[\relayindex])\!=\!\displaystyle\frac{1}{2}\log_2{\frac{\covs\lefto[\othervar[\destindex]\othervar[\relayindex]|\var[\relayindex]\right]}{\covs\lefto[\othervar[\destindex]\othervar[\relayindex]|\var[\relayindex]\var[\sourceindex]\right]}}
=&\\
&\qquad\!=\!\frac{1}{2}\log_2{\frac{\covs\lefto[\othervar[\destindex]\othervar[\relayindex]|\var[\relayindex]\right]}{\noisePower}}
=\capacitySymbol\left(\frac{\left(\pathgain[\sourceindex]{\destindex} + \pathgain[\sourceindex]{\relayindex}\right)\powermax[\sourceindex]{}\left(1-\corrsr[]{}^2\right)}{\noisePower}\right);\\
&\displaystyle\information(\var[\sourceindex]\,\var[\relayindex]\,;\,\othervar[\destindex]) = \frac{1}{2}\log_2{\frac{\Varop\left(\othervar[\destindex]\right)}{\covs\lefto[\othervar[\destindex]|\var[\sourceindex]\,\var[\relayindex]\right]}} =\frac{1}{2}\log_2{\frac{\Varop\left(\othervar[\destindex]\right)}{\noisePower}}=\\
&\qquad=\capacitySymbol\left(\frac{\pathgain[\sourceindex]{\destindex}\powermax[\sourceindex]{}+\pathgain[\relayindex]{\destindex}\powermax[\relayindex]{}+2\corrsr[]{}\sqrt{\pathgain[\sourceindex]{\destindex}\powermax[\sourceindex]{}\pathgain[\relayindex]{\destindex}\powermax[\relayindex]{}}}{\noisePower}\right). \end{flalign*}
\end{proof}
Performing the maximization over $\corrsr[]{}$, we can easily obtain the upper bound. In other words, under average power constraints, a jointly Gaussian input distribution simultaneously maximizes both mutual information terms. The important result of
\theoremname~\ref{th:AWGNcutset} is: If $\SNR[\sourceindex]{\relayindex}>\SNR[\relayindex]{\destindex}$ (i.e. the relay node is in a good position to receive signals from the transmitter rather than to deliver symbols to the destination) then the cutset upper bound capacity is achieved by the maximization of the multiple{-}access term (the second term), otherwise it is achieved by the broadcast term.

In the AWGN cutset upper bound capacity, when $\corrsr[]{}\ge0$, the broadcast term, $\information(\var[\sourceindex]\;;\; \othervar[\destindex]\,\othervar[\relayindex] | \var[\relayindex])$, is decreasing function of $\corrsr[]{}$, while the multiple{-}access term, $\information(\var[\sourceindex]\,\var[\relayindex]\;;\;\othervar[\destindex])$, is increasing. Thus, the maximum of $\upperbound[cutset]{(\sourceindex;\relayindex;\destindex)}$ is taken at the point at which two terms are equal:
\begin{flalign}\label{eq:maxeq}
\left(\pathgain[\sourceindex]{\relayindex}+\pathgain[\sourceindex]{\destindex}\right)\powermax[\sourceindex]{}\,\corrsr[]{}^{2} +2\sqrt{\pathgain[\sourceindex]{\destindex}\powermax[\sourceindex]{}\pathgain[\relayindex]{\destindex}\powermax[\relayindex]{}}\;\corrsr[]{}+\left(\pathgain[\relayindex]{\destindex}\powermax[\relayindex]{}-\pathgain[\sourceindex]{\relayindex}\powermax[\sourceindex]{}\right)=0
\end{flalign}

We analyze the \equationname~\ref{eq:maxeq} in different cases. First, we assume the parameter $\corrsr[]{}\ge0$ is fixed and it is possible for the source and relay nodes to adjust the transmission powers. Then, we assume the source and relay transmit at the maximum power and the network can tune the value of $\corrsr[]{}$.
\begin{enumerate}[1)]
\item~We assume $\varSpace[\sourceindex]$ and $\varSpace[\relayindex]$ are statistically independent; i.e. $\corrsr[]{}=0$. The capacity region of $\upperbound[cutset]{(\sourceindex;\relayindex;\destindex)}$ achieves its maximum with adjusting the power expenditure of the source and relay nodes. In \equationname~\ref{eq:maxeq} with $\corrsr[]{}=0$, it is enough to satisfy $\pathgain[\relayindex]{\destindex}\power[\relayindex]{}-\pathgain[\sourceindex]{\relayindex}\power[\sourceindex]{}=0$, that is equal to the maximum capacity problem of \figurename~\ref{fig:fardest} and \equationname~\ref{eq:fardest}. With fixed $\corrsr[]{}=0$ and an appropriate power control, the $\upperbound[cutset]{(\sourceindex;\relayindex;\destindex)}$ takes:
\begin{equation*}\label{eq:maxcutset}
\upperbound[cutset]{(\sourceindex;\relayindex;\destindex)}\!=\! \capacitySymbol\left(\frac{\left(\pathgain[\sourceindex]{\relayindex}\!+\!\pathgain[\sourceindex]{\destindex}\right)\power[\sourceindex]{}}{\noisePower}\right)\!=\!\capacitySymbol\left(\frac{\pathgain[\sourceindex]{\destindex}\power[\sourceindex]{}\!+\!\pathgain[\relayindex]{\destindex}\power[\relayindex]{}}{\noisePower}\right) \end{equation*}
The relation $\pathgain[\sourceindex]{\relayindex}\power[\sourceindex]{}\!=\!\pathgain[\relayindex]{\destindex}\power[\relayindex]{}$ means that the data rate of the channel between the source and relay nodes is equal to that between the relay and destination. In this case, the ${\sourceindex\to\relayindex\to\destindex}$ path has the maximum possible efficiency. Choosing two independent random spaces for $\varSpace[\sourceindex]$ and $\varSpace[\relayindex]$ guarantees minimum processing for relaying's data process. Suppose the destination node consists of two different antennas with orthogonal frequencies which are used simultaneously for receiving data from the source and relay nodes. Thus, there is no interference between the $\relayindex\to\destindex$ and $\sourceindex\to\destindex$ links. So, it results:
\begin{equation*}
\upperbound[cutset]{(\sourceindex;\relayindex;\destindex)} = \capacitySymbol\left(\frac{\pathgain[\sourceindex]{\destindex}\power[\sourceindex]{}+\pathgain[\relayindex]{\destindex}\power[\relayindex]{}}{\noisePower}\right)=\capacitySymbol\left(\SNR[\sourceindex]{\destindex}+\SNR[\relayindex]{\destindex}\right). \end{equation*}
Therefore, when $\corrsr[]{}=0$, the requirements to achieve the upper bound cutset capacity is an appropriate power control, and an adder component at destination to sum the received SNRs. Equivalently, the reliable communication forms a parallel channel between $\sourceindex\to\destindex$ and $\relayindex\to\destindex$ links.
\item~If the sequences of $\var[\sourceindex]$ and $\othervar[\relayindex]$ are drawn from two correlated code spaces with a strictly positive correlation $\left(\corrsr[]{}>0\right)$, the necessary condition for the power control to achieve the capacity $\upperbound[cutset]{(\sourceindex;\relayindex;\destindex)}$ is: $\SNR[\sourceindex]{\relayindex}>\SNR[\relayindex]{\destindex}$. This means that the data rate of $\relayindex\to\destindex$ channel is less than that $\sourceindex\to\relayindex$ link. Hence, there must be a delay between sending the broadcast message and the multiple{-}access message.
\item~Finally, we suppose that the source and relay nodes transmit at maximum power and the network is able to tune the correlation parameter. The appropriate value of $\corrsr[]{}$ is found by resolving \equationname~\ref{eq:maxeq} for $\corrsr[]{}$.
\begin{equation*}\label{eq:bestcorrsr}
\corrsr[]{}^* = \frac{-\sqrt{\pathgain[\sourceindex]{\destindex}\powermax[\sourceindex]{}\pathgain[\relayindex]{\destindex}\powermax[\relayindex]{}}+\sqrt{\pathgain[\sourceindex]{\relayindex}\powermax[\sourceindex]{}\left(\pathgain[\sourceindex]{\destindex}\powermax[\sourceindex]{}\!+\!\pathgain[\sourceindex]{\relayindex}\powermax[\sourceindex]{}\!-\!\pathgain[\relayindex]{\destindex}\powermax[\relayindex]{}\right)}}{\left(\pathgain[\sourceindex]{\destindex}+\pathgain[\sourceindex]{\relayindex}\right)\powermax[\sourceindex]{}}
\end{equation*}
on the condition that $\Delta=\left(\pathgain[\sourceindex]{\destindex}+\pathgain[\sourceindex]{\relayindex}\right)\powermax[\sourceindex]{}-\pathgain[\relayindex]{\destindex}\powermax[\relayindex]{}> 0$ and $0\le\corrsr[]{}^*\le1$. If on the contrary $\Delta\!\le\!0$, then the capacity of $\relayindex\!\to\!\destindex$ is higher than that of the broadcast channel. This means that the link between relay and destination must be kept idle for receiving the broadcast message and therefore using the links is not highly efficient. Instead, the condition $\Delta\!>\!0$ means the broadcast capacity is higher than $\relayindex\!\to\!\destindex$ channel data rate. Using an appropriate memory at the relay node, all channels get busy.\\
\end{enumerate}

For strictly negative $\corrsr[]{}$, from the formula of $\upperbound[cutset]{(\sourceindex;\relayindex;\destindex)}$ in \theoremname~\ref{th:AWGNcutset}, it is derived that reducing coefficient $\corrsr[]{}$ toward $-1$ yields decreasing either broadcast and multiple{-}access capacities. The solution to compensate the affect of a negative $\corrsr[]{}$ is to raise significantly the upper bound limits of the source and relay power consumption.


\section{Amplify and forward technique}\label{sec:AFTech111}

In the AF technique, the transmit message $\signal[\sourceindex]{}$ is a sequence of $\blocknumber$ sub{-}messages $\signal[\sourceindex]{1},...,\signal[\sourceindex]{\blockindex},...,\signal[\sourceindex]{\blocknumber}$ which are independently and uniformly drawn from the message set $\signalset[\sourceindex]=\{0, 1, ..., 2^{\signalnumber\rate[\sourceindex]{}}-1\}$. Each sub-message $\signal[\sourceindex]{\blockindex}$ is separately encoded to $\var[\sourceindex][\blockindex]\left(\signal[\sourceindex]{\blockindex}\right)$ under the constraint that $\expectation[{\var[\sourceindex]^2}]\le\powermax[\sourceindex]{}$. In each block index $\blockindex$, the source node broadcasts $\var[\sourceindex][\blockindex]\left(\signal[\sourceindex]{\blockindex}\right)$, and at the same time the relay just increase the amplitude of the analog observed signal $\othervar[\relayindex][\blockindex-1]$ to result a normalized transmit $\var[\relayindex][\blockindex]\left(\signal[\sourceindex]{\blockindex-1}\right)$ as:
\begin{equation}\label{eq:AFrelay}
\begin{split}
\var[\relayindex][\blockindex]\left(\signal[\sourceindex]{\blockindex-1}\right)&=\AFconstant[][\blockindex].\othervar[\relayindex][\blockindex-1]\\
&=\AFconstant[][\blockindex].\left(\sqrt{\pathgain[\sourceindex]{\relayindex}}\,\var[\sourceindex][\blockindex-1]\left(\signal[\sourceindex]{\blockindex-1}\right)+\noise[\relayindex]\right)
\end{split}
\end{equation}
wherein $\AFconstant[]$ is amplification factor and is chosen to satisfy the relay's power limit. The relay node has its own power constraint as $\expectation[{\var[\relayindex]^2}]\le\powermax[\relayindex]{}$, so that:
\begin{align}\label{eq:AFfactorlimit}
|\AFconstant[][\blockindex]|^{2}\le\frac{\powermax[\relayindex]{}}{\,\noisePower + \pathgain[\sourceindex]{\relayindex}\powermax[\sourceindex]{}}
\end{align}

For simplicity, we assume $\,\AFconstant[][\blockindex]=\AFconstant[]$ in every block. As can be observed, if $\,\noisePower + \pathgain[\sourceindex]{\relayindex}\powermax[\sourceindex]{}\gg\powermax[\relayindex]{}\,$ the effect of the relay is negligible. Combining \equationname~\ref{eq:destreceive} and \equationname~\ref{eq:AFrelay} gives:
\begin{multline}\label{eq:AFdestreceive}
\othervar[\destindex][\blockindex]= \sqrt{\pathgain[\sourceindex]{\destindex}}\var[\sourceindex][\blockindex]\left(\signal[\sourceindex]{\blockindex}\right) + |\AFconstant[]|.\sqrt{\pathgain[\sourceindex]{\relayindex}\pathgain[\relayindex]{\destindex}}\,\var[\sourceindex][\blockindex-1]\left(\signal[\sourceindex]{\blockindex-1}\right)
\\
 + |\AFconstant[]|.\sqrt{\pathgain[\relayindex]{\destindex}}\noise[\relayindex] + \noise[\destindex]
\end{multline}
Thus, the maximum capacity of AF scheme turns out to be:
\begin{equation}\label{eq:AFcapacity111}
\displaystyle\upperbound[AF]{(\sourceindex;\relayindex;\destindex)}\!=\! \capacitySymbol\left(\frac{\left(\sqrt{\pathgain[\sourceindex]{\destindex}}\,+\,|\AFconstant[]|.\,\sqrt{\pathgain[\sourceindex]{\relayindex}\pathgain[\relayindex]{\destindex}}\,\right)^2\,.\,\powermax[\sourceindex]{}}{\left(1+|\AFconstant[]|^{2}\pathgain[\relayindex]{\destindex}\right)\noisePower}\right)
\end{equation}

The relay node does not regenerate any new code, and consequently the complexity of this scheme is low. Since the relay node amplifies whatever it receives, including noise, it is mainly useful in high SNR environments.
When the channel between the transmitter and the relay is very noisy, increasing the amplification factor $\AFconstant[]$ increases the noise at the destination. The relay should thus not always transmit with maximum power.
Reference \cite[p. 46]{LiuBook09} demonstrates that under the condition:
 $|\AFconstant[]| \le \SNR[\sourceindex]{\relayindex}$,
the $\displaystyle\upperbound[AF]{(\sourceindex;\relayindex;\destindex)}$ outperforms the capacity of the maximal ratio combining (MRC) technique that is:
\begin{equation}\label{eq:MRC111}
\displaystyle\upperbound[MRC]{(\sourceindex;\relayindex;\destindex)}=\displaystyle\capacitySymbol\left(\SNR[\sourceindex]{\destindex}+\frac{\SNR[\sourceindex]{\relayindex}\,\SNR[\relayindex]{\destindex}}{\SNR[\sourceindex]{\relayindex}+\SNR[\relayindex]{\destindex}}\right)
\end{equation}

Comparing \eqref{eq:AFcapacity111} and \eqref{eq:MRC111} we find that the MRC technique performs better than AF under the following condition:
\begin{equation}
\displaystyle\frac{\SNR[\sourceindex]{\relayindex}\,\SNR[\sourceindex]{\destindex}}{\SNR[\sourceindex]{\relayindex}+\SNR[\sourceindex]{\destindex}} < |\AFconstant[]|^{2}.\pathgain[\relayindex]{\destindex}
\end{equation}

\section{Case study}\label{sec:casestudy111}

\begin{figure}
  \begin{center}
    \psfrag{s}[c][c][0.9]{source ($\sourceindex$)}
    \psfrag{r}[c][c][0.9]{relay ($\relayindex$)}
    \psfrag{d}[c][c][0.9]{dest. ($\destindex$)}
    \psfrag{d1}[c][c][1]{$d_{\sourceindex\relayindex}$}
    \psfrag{h}[c][c][1]{$d_{\relayindex}$}
    \psfrag{1}[c][b][1]{$d_{\sourceindex\destindex}$}
    \includegraphics[width=0.5\columnwidth]{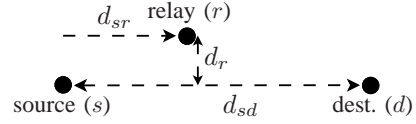}
  \end{center}
  \caption{A single relay communication network scenario.}
  \label{fig:Net_R111}
\end{figure}
In this section, we exemplify the various outer region bounds presented so far in this section. We consider a point{-}to{-}point relayed communication with Gaussian channels wherein the transmitter $\sourceindex$, relay $\relayindex$, and sink $\destindex$ are located as sketched in \figurename~\ref{fig:Net_R111}. We assume a vertical distance of $d_{\relayindex}$ between the relay and the $\sourceindex\to\destindex$ direct{-}link. The path condition values ${\pathgain[\sourceindex]{\relayindex}=\pathgain[\relayindex]{\destindex}=\pathgain[\sourceindex]{\destindex}=1}$ are scaled with respect to Notation~\ref{nota:scaledpathgain}.
First, we experiment a high SNR environment and suppose the source and destination are located at a distance of $d_{\sourceindex\destindex}=1\m$, and the relay is located at a vertical distance of $d_{\relayindex}=0.1\m$ and it is horizontally moving from $d_{\sourceindex\relayindex}=-0.5\m$ to $d_{\sourceindex\relayindex}=1.5\m$. \figurename~\ref{fig:high_111} plots various data rates for $\powermax[\sourceindex]{}=\powermax[\relayindex]{}=100\mW$, and $\noisePower=1\muW$. The curve labeled AF shows the outer region of AF strategy with the largest possible scaling factor $\AFconstant[]$ in \equationname~\ref{eq:AFfactorlimit}.
The curve labeled $\corrsr[]{}$ plots a particular value of the correlation coefficient we tried for this example.

\begin{figure}
  \begin{center}
    \psfrag{xaxis}[c][c][0.9]{$d_{\sourceindex\relayindex} \left[\m\right]$}
    \psfrag{yaxis}[c][c][0.8]{Rate $\left[\bpsHz\right]$}
    \includegraphics[width=0.75\columnwidth]{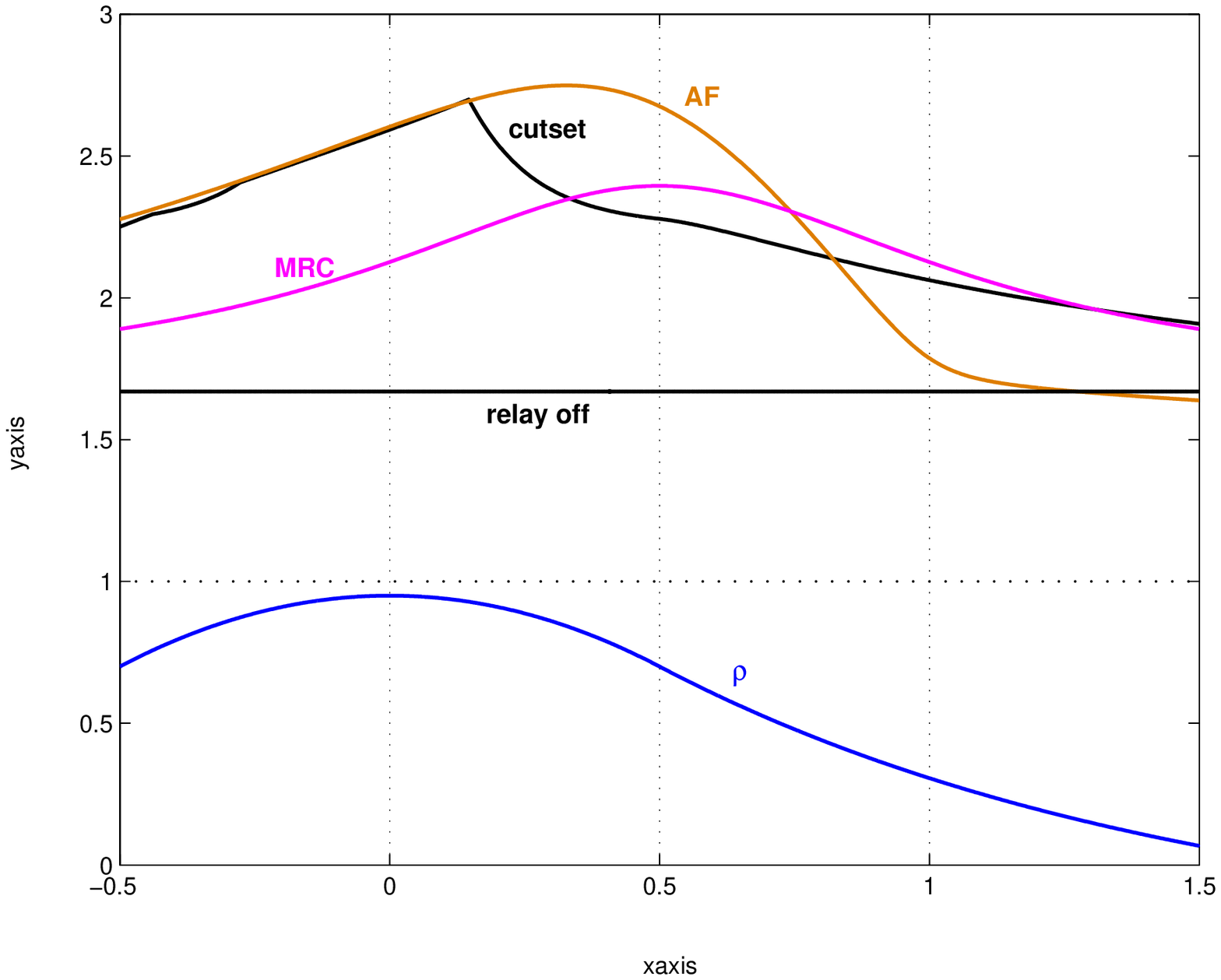}
  \end{center}
  \caption{Rates for one relay with $\powermax[\sourceindex]{}=\powermax[\relayindex]{}=100\mW$, $\noisePower=1\muW$,  $d_{\sourceindex\destindex}=1\m$, and $d_{\relayindex}=0.1\m$.}
  \label{fig:high_111}
\end{figure}

In the studied case study, the AF and MRC techniques show a very good performance.
This is due to the fact of the short distances
result in a high SNR.  As the relay moves toward the destination ($d_{\sourceindex\relayindex}\to 1$), the achieved signal at the relay becomes weaker and this significantly decreases the AF data rate. Generally, the AF and MRC techniques would be useful when the relay is located so as to be able to perfectly receive and deliver signals, or equivalently, the relay is equidistant from the transmitter and the sink ($d_{\sourceindex\relayindex}\to 0.5$). 

As the relay moves toward transmitter ($d_{\sourceindex\relayindex}\to 0$), the cutset data rate is equal to the second term of \equationname~\ref{eq:AWGNcutset}.
Correspondingly, that is equal to the first term of \equationname~\ref{eq:AWGNcutset}, as the relay is placed close to the destination ($d_{\sourceindex\relayindex}\to 1$).

Now, we consider a point{-}to{-}point relayed connection in a low SNR regime. The transmitter and the destination are placed at a distance of $d_{\sourceindex\destindex}=500\m$, and the relay is moving in a range of $d_{\sourceindex\relayindex}=-100\div600\m$ with a vertical distance of $d_{\relayindex}=10\m$.
We set the same $\AFconstant[], \powermax[\sourceindex]{}, \powermax[\relayindex]{}$, and $\noisePower$ as the previous simulation.
\figurename~\ref{fig:low_111} plots various data rates.
\begin{figure}
  \begin{center}
    \psfrag{xaxis}[c][c][0.9]{$d_{\sourceindex\relayindex} \left[\m\right]$}
    \psfrag{y1axis}[c][c][0.8]{Rate $\left[\bpsHz\right]$}
    \psfrag{y2axis}[c][c][0.85]{\color{blue}{Correlation coefficient $\corrsr[]{}$}}
    \includegraphics[width=0.8\columnwidth]{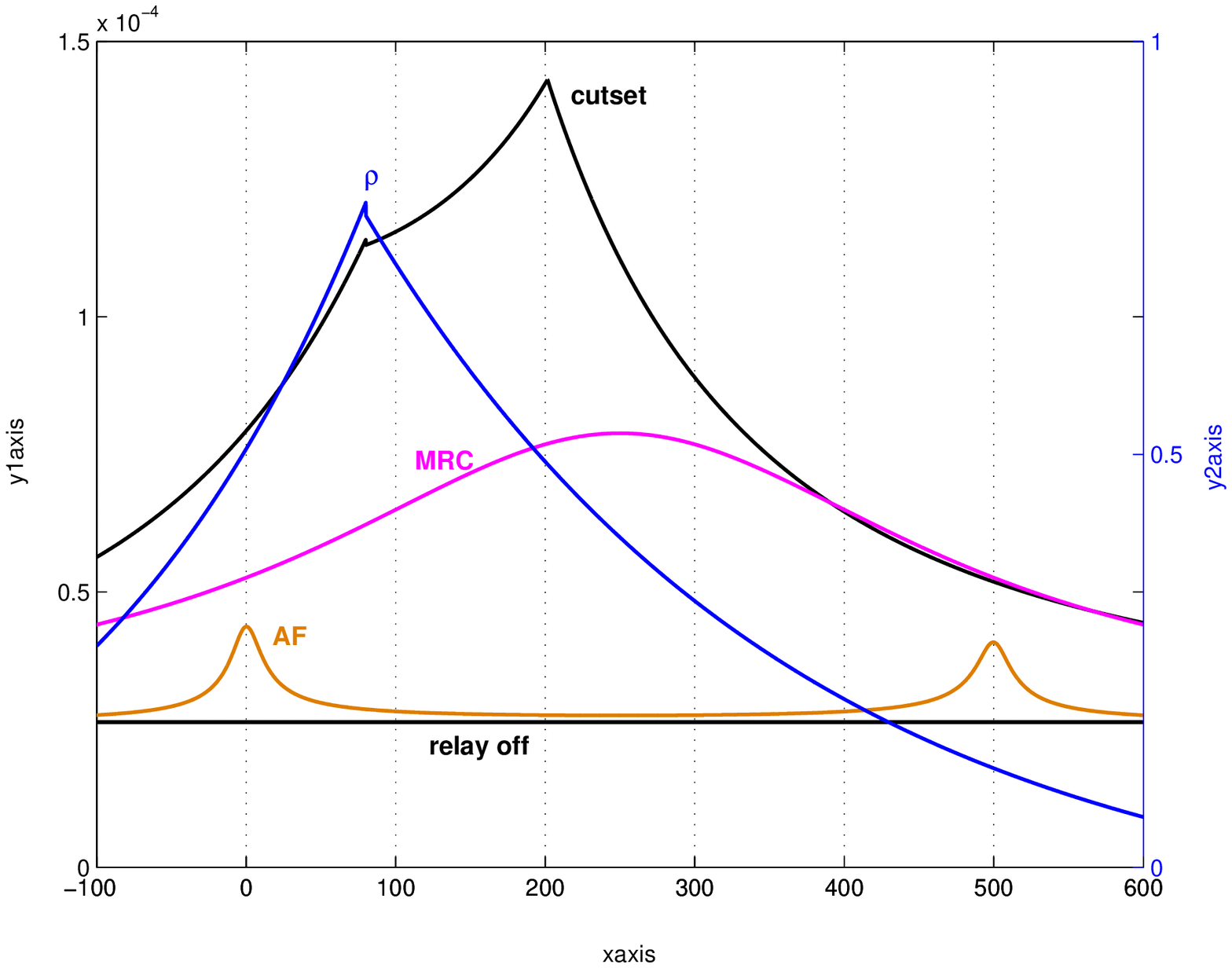}
  \end{center}
  \caption{Rates for one relay with $\powermax[\sourceindex]{}=\powermax[\relayindex]{}=100\mW$, $\noisePower=1\muW$, ${d_{\sourceindex\destindex}=500\m}$, and $d_{\relayindex}=10\m$.}
  \label{fig:low_111}
\end{figure}

We draw a different experimental function for correlation value which is the curve labeled $\corrsr[]{}$.
From \figurename~\ref{fig:low_111}, it is clearly derived that the AF technique is not quite useful in a low SNR network, whereas the MRC technique performs much better than AF.
This is because the received signal at the relay is very noisy, and also the scaling factor is higher than that in the previous scenario. In this situation (almost) no signal perfectly approaches to the destination.
Our experiments in a given scenario with different parameters result that reducing the amplification factor does not effect the AF data rate.
As can be seen, in low SNR, the coding technique shows significantly higher rates than in the direct{-}link.

\section{Summary}\label{sec:relay111dis}

The data rate of a relayed communication is a function of both relaying strategy and positions of the nodes.
The cutset upper bound capacity is decomposed into two terms: broadcast capacity from transmitter's viewpoint and multiple channel access at the destination.
When the relay is close to the transmitter, the upper bound capacity of relaying communication is equal to that multiple{-}access channel. On the other hand, when the relay is close to the destination, the upper bound capacity of relaying communication is equal to that broadcast channel.
In a low SNR environment, cutset technique achieves a significantly high data rate, whereas AF shows a good performance in high SNR regime.

\setlength{\IEEEilabelindent}{2\IEEEiedmathlabelsep}
\IEEEusemathlabelsep

\bstctlcite{mybibfile:BSTcontrol}
\bibliography{IEEEabrv,mybibfile}

\end{document}